
\documentclass[11pt,a4paper]{article}
\usepackage{silence} 
\WarningFilter*{hyperref}{Option `pagecolor'}
\usepackage{jheppub}
\usepackage{graphicx}
\usepackage{soul}
\usepackage{cancel}
\usepackage[table]{xcolor}
\usepackage{orcidlink} 
\usepackage{enumitem}
\usepackage{amssymb,amsmath,mathrsfs,bm,bbm,mathtools} 
\usepackage{amsthm}
\definecolor{ao(english)}{rgb}{0.0, 0.5, 0.0}
\hypersetup{colorlinks=true,linkcolor=blue,urlcolor=blue,citecolor=ao(english)}
\usepackage{pifont} 
\usepackage{changepage}
\usepackage{stackrel}
\usepackage{multirow}
\usepackage[normalem]{ulem} 

\setcounter{MaxMatrixCols}{10}

\newcommand\nn{\nonumber\\}

\newcommand{\bma}{\left(\begin{array}}
	\newcommand{\ema}{\end{array}\right)}
\newcommand{\be}{\begin{equation}}
	\newcommand{\ee}{\end{equation}}
\newcommand{\ben}{\begin{equation*}}
	\newcommand{\een}{\end{equation*}}
\newcommand{\ba}{\begin{eqnarray}}
	\newcommand{\ea}{\end{eqnarray}}
\newcommand{\ban}{\begin{eqnarray*}}
	\newcommand{\ean}{\end{eqnarray*}}
\newcommand{\bs}{\begin{subequations}}
	\newcommand{\es}{\end{subequations}}
\newcommand{\bc}{\begin{center}}
	\newcommand{\ec}{\end{center}}
\newcommand{\ve}{\varepsilon}


\newcommand{\au}[2]{#1.~#2}
\newcommand{\arX}[1]{\href{http://arxiv.org/abs/#1}{{\cob arXiv:#1}}}
\newcommand{\oarX}[1]{\href{http://arxiv.org/abs/#1}{{\cob #1}}}

\newcommand{\book}[5]{\emph{#1}, #2, #3, #4 (#5)}
\newcommand{\books}[4]{\emph{#1}, #2, #3 (#4)} 
\newcommand{\doin}[6]{\href{http://dx.doi.org/#1}{{\cob {\it #2 #3} {\bf #4} (#6) #5}}}
\newcommand{\doinn}[5]{\href{http://dx.doi.org/#1}{{\cob {\it #2} {\bf #3} (#5) #4}}}
\newcommand{\doij}[5]{\href{http://dx.doi.org/#1}{{\cob {\it #2} {\bf #3} (#5) #4}}}

\newcommand{\procm}[6]{in \emph{#1}, #2 (eds.), #3, #4, #5 (#6)}
\newcommand{\tia}[1]{\textit{#1},}

\newcommand{\boxd}[1]{\boxed{\phantom{\Biggl(}#1\phantom{\Biggl)}}}

\renewcommand{\leq}{\leqslant}
\renewcommand{\geq}{\geqslant}
\newcommand{\Eq}[1]{(\ref{#1})}
\newcommand{\Eqq}[1]{eq.~(\ref{#1})}

\def\rme{e}
\def\rmd{d}
\def\rmi{i}

\def\Re{\text{Re}}
\def\Im{\text{Im}}

\def\a{\alpha}

\def\de{\delta}

\def\g{\gamma}
\def\la{\lambda}

\def\e{\epsilon}
\def\ve{\varepsilon}

\def\om{\omega}
\def\G{\Gamma}

\def\vp{\varphi}

\def\B{\Box}

\def\cA{\mathcal{A}}

\def\cD{\mathcal{D}}

\def\cG{\mathcal{G}}

\def\cL{\mathcal{L}}
\def\cM{\mathcal{M}}

\def\cO{\mathcal{O}}

\def\cob{\color{blue}}

\newtheorem{theo}{Theorem}

\def\pl{p}
\def\kl{k}
\def\pe{p_\textsc{e}}
\def\ke{k_\textsc{e}}
\def\pL{p_\textsc{l}}

\begin{document}
	
\renewcommand{\thefootnote}{\fnsymbol{footnote}}
	
\title{Amplitude prescriptions in field theories with complex poles} 
	
\author[a,b]{Damiano Anselmi\,\orcidlink{0000-0001-6674-1328},}
\emailAdd{damiano.anselmi@unipi.it}
\affiliation[a]{Dipartimento di Fisica ``E.~Fermi'', Largo B.~Pontecorvo 3, 56127 Pisa, Italy}
\affiliation[b]{INFN, Sezione di Pisa, Largo B.~Pontecorvo 3, 56127 Pisa, Italy}

\author[c,d,e]{Fabio Briscese\,\orcidlink{0000-0002-9519-5896},}
\emailAdd{fabio.briscese@uniroma3.it}
\affiliation[c]{Dipartimento di Architettura, Università Roma Tre, Via Aldo Manuzio 68L, 00153 Rome, Italy}
\affiliation[d]{Istituto Nazionale di Alta Matematica Francesco Severi, Gruppo
Nazionale di Fisica Matematica, Piazzale Aldo Moro 5, 00185 Rome, Italy}
\affiliation[e]{Istituto Nazionale di Fisica Nucleare, Sezione di Roma 3, Via della Vasca Navale 84, 00146 Rome, Italy}

\author[f,*]{Gianluca Calcagni\,\orcidlink{0000-0003-2631-4588}\note{Corresponding author.}}
\emailAdd{g.calcagni@csic.es}
\affiliation[f]{Instituto de Estructura de la Materia, CSIC, Serrano 121, 28006 Madrid, Spain}

\author[g,h]{and Leonardo Modesto\,\orcidlink{0000-0003-2783-8797}}
\emailAdd{leonardo.modesto@unica.it}
\affiliation[g]{Dipartimento di Fisica, Universit\`a di Cagliari, Cittadella Universitaria, 09042 Monserrato, Italy}
\affiliation[h]{INFN, Sezione di Cagliari, Cittadella Universitaria, 09042 Monserrato, Italy}

\abstract{In the context of field theories with complex poles, we scrutinize four inequivalent ways of defining the scattering amplitudes, each forfeiting one or more tenets of standard quantum field theory while preserving the others: (i) a textbook Wick rotation by analytic continuation of the external momenta from Euclidean to Lorentzian signature (no optical theorem), (ii) the Lee--Wick--Nakanishi prescription, integrating along a certain contour in the complex energy plane (no Lorentz invariance), (iiii) the fakeon prescription, where, in addition, spatial momenta are integrated on a complex path defined by the locus of singularities of the loop integrand (no analyticity of the amplitude) and (iv) to work directly on Minkowski spacetime, which violates the optical theorem and also bars power-counting renormalizability. In general, mixed Euclidean-Lorentzian prescriptions for internal and external momenta in loop integrals break Lorentz invariance, regardless of the type of masses involved. We conclude that, of the above four options, only the fakeon prescription is physically viable and can have applications to quantum gravity.}

\keywords{Space-Time Symmetries, Models of Quantum Gravity}

\maketitle
\renewcommand{\thefootnote}{\arabic{footnote}}


\section{Introduction}\label{sec1}

Amid the exploration of new quantum field theories and their properties,
complex poles have made their appearance both as a general proof of concept \cite%
{Veltman:1963th,Yamamoto:1969vb,Yamamoto:1970gw,Nakanishi:1972pt,Bollini:1998hj,Mannheim:2018ljq,Buoninfante:2025klm} and in as diverse physical scenarios as modified electrodynamics
(Lee--Wick theory) \cite%
{Lee:1969fy,Lee:1970iw,Lee:1969zze,Cutkosky:1969fq,Nakanishi:1971jj}, a
modified Higgs sector \cite{Jansen:1993jj,Koshelev:2020fok}, standard gauge
theories and quantum chromodynamics (QCD) \cite%
{Nakanishi:1975aq,Dudal:2005na,Dudal:2007cw,Dudal:2008sp,Baulieu:2009ha,Dudal:2011gd,Capri:2012hh}
and quantum gravity \cite%
{Holdom:2015kbf,Modesto:2015ozb,Modesto:2016ofr,Mannheim:2020ryw,Liu:2022gun,deBrito:2023pli,Tokareva:2024sct,Asorey:2024mkb,BrCa}
(see also \cite{Frasca:2022gdz,Addazi:2024qcv}). Interest in complex poles
has been dictated by the fact that, according to the evidence available in
all known accounts, they respect perturbative unitarity when they are
present in complex-conjugate pairs. This can be exciting news because, on
one hand, one can construct higher-derivative theories based on complex
pairs, which have the improved renormalizability typical of
higher-derivative models without the hindrance of ghosts. On the other hand,
in nonlocal theories such as nonlocal quantum gravity \cite%
{Modesto:2017sdr,Buoninfante:2022ild,BasiBeneito:2022wux,Koshelev:2023elc}
and fractional quantum gravity \cite{BrCa,Calcagni:2021aap,Calcagni:2022shb}%
, complex-conjugate pairs may be hidden in operators with infinitely many
derivatives \cite{Pais:1950za} but, again, they do not spoil
renormalizability and unitarity.

It may come as an unpleasant surprise that all these models suffer from a
problem which is not as well known as it would deserve: if scattering amplitudes are not defined extra carefully, Lorentz invariance is broken at the quantum level. Nakanishi \cite{Nakanishi:1971jj} was the
first to point it out in the context of Lee--Wick theory. Let $\mathcal{P}$
and $\mathcal{P}^*$ denote the particles in a pair with complex-conjugate masses. For our purpose, we can assume that such masses are purely complex, $\pm\rmi M^2$ with $M$ real.
According to \cite{Nakanishi:1971jj}, the breaking of Lorentz invariance already
occurs at the level of a bubble diagram with $\mathcal{P}$ in an
internal line with momentum $k$ and propagator\footnote{We use
signature $(-,+,\dots,+)$.}
\be\label{propc1}
\tilde G(-k^2,\rmi M^2) = -\frac{\rmi}{k^2+\rmi M^2}\,,\qquad M\in\mathbb{R}\,,
\ee
and $\mathcal{P}^*$ in the other internal line with momentum $p+k$ and propagator $-\rmi/[(p+k)^2-\rmi M^2]$. 
 This result does not change if one takes all combinations of bubble diagrams with the particles $\mathcal{P}$ and $\mathcal{P}^*$, namely, the sum of four bubbles with internal lines corresponding to $\mathcal{P}$-$\mathcal{P}^*$, $%
\mathcal{P}^*$-$\mathcal{P}$, $\mathcal{P}$-$\mathcal{P}$ and $\mathcal{P}^*$%
-$\mathcal{P}^*$, even if the total is equivalent to a single bubble
diagram with a Lorentz-invariant integrand. In this case, internal lines correspond to the propagators of the pairs $-\rmi/(k^4+M^4)$ and $-\rmi/[(p+k)^4+M^4]$ instead of the single particles.

In sections~\ref{sec2} and \ref{sec3}, we recall in a novel, simplified way that the problem arises because, if one defines the physical amplitude of a theory with complex masses as the analytic continuation of the Euclidean amplitude, in the same way as in standard quantum field theory (QFT), then the optical theorem is violated. This problem is what led to the Lee--Wick prescription in the first place \cite{Lee:1969fy,Lee:1970iw,Lee:1969zze}. Then, we show that Lorentz symmetry is manifestly violated when one integrates the loop energy along the Lee--Wick complex path while keeping the space components of the loop momentum real. We refer to this procedure as the Lee--Wick--Nakanishi (LWN) prescription, to distinguish it from Lorentz preserving approaches to Lee--Wick theories, such as the one of \cite{Cutkosky:1969fq}. This symmetry breaking  depends neither on the details of the action nor on the choice of contour. Whether a complex pole is isolated or paired to its conjugate is also irrelevant.


This sounds like a fatal blow to the above constructions in quantum gravity and in other contexts, including relatively orthodox models of gauge theories and quark confinement
in QCD \cite{Dudal:2005na,Dudal:2007cw,Dudal:2008sp,Baulieu:2009ha,Dudal:2011gd}. Fortunately, the cure has already been known since a while and the key is how one defines loop integrals. As we show in an example with \emph{real} masses, Lorentz violation is actually not a physical issue intrinsic to complex masses but a general technical problem arising when integrands have denominators with finite real and complex parts. These integrals are of the type 
\be\label{eq12}
\int\rmd x\,\frac{f(x)}{x+\rmi y}\,,\qquad \mathbb{R}\ni y\neq 0\,,
\ee
where $y$ is not infinitesimal. In theories with complex-valued masses, the LWN prescription essentially leads to the same problems as for loop integrals with Euclidean internal momenta and Lorentzian external ones. If one moves forward with respect to the plain LWN prescription, Lorentz invariance may be restored at all perturbative orders without altering unitarity. One manifestly Lorentz invariant approach is the one by Cutkosky--Landshoff--Olive--Polkinghorne (CLOP) \cite{Cutkosky:1969fq}, which leaves room for ambiguities at high orders. A different approach is to restore Lorentz invariance by deforming the integrals on spatial momenta to complex paths \cite%
{Anselmi:2017yux}. The latter is one of the ingredients of the fakeon or Anselmi--Piva (AP)
prescription \cite%
{Anselmi:2017yux,Anselmi:2017lia,Anselmi:2018bra,Anselmi:2019rxg,Anselmi:2021hab,Anselmi:2022toe}, a general recipe to remove ghost modes from the physical spectrum of quantum field theories. 
 This seemingly technical point is sufficient to solve the physical problem of avoiding violations of one of the best constrained symmetries in Nature. However, its involved formulation has so far contributed to the aura of mystery around the topic of Lorentz symmetry in the presence of particles with complex masses.

Section \ref{sec4} of the present paper aims at filling this gap and presenting the AP prescription in as pedagogical terms as possible. We do this with a one-loop example with complex poles in $D=1+1$ spacetime dimensions, showing how a specific path in the complex plane of spatial momentum restores Lorentz invariance of the bubble diagram. In the same section, we also describe a notable simplification for calculations with this procedure, first used in \cite{Anselmi:2017yux,Liu:2022gun}. 

The net conclusion we can draw from what is found in this paper are that, when we define a scattering amplitude in QFT in the presence of complex poles, we face a choice among the following four alternatives (table \ref{tab1}):
\begin{itemize}
\item \emph{By-the-book prescription}: the calculation is done with Euclidean internal and external momenta. At the end, the amplitude is Wick rotated by means of the usual analytic continuation. Lorentz symmetry is preserved, analyticity holds but the optical theorem is violated. This is the standard prescription in ordinary QFT with real poles.
\item \emph{LWN prescription}: the calculation is done with Lorentzian internal and external momenta. The loop energy $k^0$ is integrated over the Lee--Wick path, which we dub $\G_{\rm LW}$ from now on, and the space components of the loop momenta are integrated over real values. The optical theorem holds but Lorentz symmetry and analyticity are violated.
\item \emph{Fakeon or AP prescription}: the same as for the LWN prescription but spatial internal momenta are integrated on a specific complex path; equivalently, the calculation is first done with Euclidean internal and external momenta and then one averages the analytic continuations around branch cuts (this procedure is called \textit{average continuation}). The optical theorem and Lorentz symmetry hold but analyticity is violated.
\item \emph{Direct Minkowski prescription}: the calculation is done directly with Lorentzian loop momenta (all integrated on $\mathbb{R}^D$) and Lorentzian external momenta. Then, Lorentz invariance and analyticity are preserved but the optical theorem and the locality of counterterms are violated (see \cite{Aglietti:2016pwz} and appendix~\ref{appA}).
\end{itemize}
In physical terms, the least harmful violation is that of analyticity, which is a mere mathematical property. Lorentz symmetry and unitarity are, in contrast, requirements with wide physical consequences. Having local counterterms is also important, since it is related to the possibility to study renormalizability with power counting and to renormalize the theory explicitly with the Bogoliubov--Parasiuk--Hepp--Zimmermann (BPHZ) scheme \cite{Lowenstein:1975ug,Piguet:1995er}.
\begin{table}[ht]
\begin{center}
\footnotesize
\begin{tabular}{|l|c|cccc|}\hline\hline
{ Prescription} & { Features } & { Lorentz}    & { Optical} & { Power}    & { Analyticity} \\
                &              & { invariance} & { theorem} & { counting} &                \\\hline
By-the-book     & $\pe$, $\ke$; $\bm{k}\in\mathbb{R}^{D-1}$  & \ding{51} &      & \ding{51} & \ding{51} \\
& E$\to$L an.\ cont.\  & & & & \\ & & & & & \\
Lee--Wick--Nakanishi   & $\pl$, $\kl$; $k^0\in{\Gamma}_{\textrm{LW}}$; $\bm{k}\in\mathbb{R}^{D-1}$ &  & \ding{51} & \ding{51} & \\
& L, patch-wise  & & & & \\ & & & & & \\
Anselmi--Piva          & $\pl$, $\kl$; $k^0\in{\Gamma}_{\textrm{LW}}$; $\bm{k}\in\mathbb{C}^{D-1}$ & \ding{51} & \ding{51} & \ding{51}  &           \\
& $\equiv$  & & & & \\
& $\pe$, $\ke$; $\bm{k}\in\mathbb{R}^{D-1}$  & & & & \\
& E$\to$L avg.\ cont.  & & & & \\ & & & & & \\
Direct Minkowski    & $\pl$, $\kl$; $k^0\in\mathbb{R}$; $\bm{k}\in\mathbb{R}^{D-1}$   & \ding{51} &  &           & \ding{51} \\
& L  & & & & \\\hline
\end{tabular}
\caption{\label{tab1} Prescriptions on scattering amplitudes in QFTs with complex poles. ``E'' stands for Euclidean and ``L'' for Lorentzian.}
\end{center}
\end{table}

The problem of the LWN prescription in preserving Lorentz invariance is that it forces $D-1$ of the integration variables $k^\mu$ ($\bm{k}$) to stay real, while only one of them ($k^0$) is allowed to take complex values. Lorentz transformations mix these variables and thus make it impossible to satisfy the LWN condition on $\bm{k}$ in all references frames. 
In section~\ref{sec4}, we note that it is possible to restore Lorentz invariance by deforming the integration domain of spatial momenta from real to complex regions (AP prescription). 

We denote the internal and external momenta in Lorentzian signature with $\kl$ and $\pl$, respectively,
\be
\kl^\mu=(k^0,\bm{k})\,,\quad \kl^2=-(k^0)^2+|\bm{k}|^2\,,\qquad \pl^\mu=(p^0,\bm{p})\,,\quad \pl^2=-(p^0)^2+|\bm{p}|^2\,, 
\ee
reserving the symbols $\ke$ and $\pe$ for the internal and external momenta in Euclidean signature:
\be
\ke^\mu=(k_D,\bm{k})\,,\quad \ke^2=k_D^2+|\bm{k}|^2\,,\qquad \pe^\mu=(p_D,\bm{p})\,,\quad \pe^2=p_D^2+|\bm{p}|^2\,.
\label{Euclideanmomenta}
\ee
Euclidean amplitudes will be denoted by the symbol $\cM$, while Lorentzian amplitudes will carry a subscript indicating the prescription adopted.

Appendix~\ref{appA} recalls the problems of the direct Minkowski prescription, while appendix~\ref{appB} revisits how modes with complex masses enter the optical theorem. Appendix~\ref{appC} reports the calculation as well as the analytic and average continuations of the one-loop bubble amplitude of a prototypical $\phi^3$ model. An example of Lorentz violation in a model with real masses and mixed Euclidean-Lorentzian signature for internal and external momenta is given in appendix~\ref{appD}.


\section{Optical theorem and analyticity}\label{sec2} 
	
In this section, we argue that in the presence of a branch cut the relation between Euclidean and Minkowskian amplitudes in QFT is not unique \textit{a priori}: two options are viable, with different impacts on the physical contents of the theory, as revealed by the optical theorem. The standard approach involves the analytic continuation on a specific side of the cut. A non-analytic alternative, henceforth referred to as the average continuation, is to average the two analytic continuations around the cut.
	
With the aim of showing what we mean, we consider the  ordinary bubble diagram in the massless $\la\phi^3$ theory. Using dimensional regularization, the amplitude in Euclidean space
\be
\cM(\pe^2)=\frac{\la^2}{2}\int\frac{\rmd^D\ke}{(2\pi)^D}\,\frac{1}{\ke^2}\frac{1}{(\ke+\pe)^2}\,,
\ee
is
\be
\cM(\pe^2)= \frac{\la^2}{(4 \pi)^2\ve}-\frac{\la^2}{32\pi^2}\left[\ln\left(\frac{\pe^2}{4\pi}\right)+\gamma_{\textsc{em}}-2\right]+O(\ve)\,,
\label{ME}
\ee
where $D=4-\ve$ and $\gamma_{\textsc{em}}$ is the Euler--Mascheroni constant, and the logarithm is taken in its principal branch, so that it is real for real positive values of its argument.

The dependence of the amplitude (\ref{ME}) on the external momentum is condensed in the variable $\pe^2$, with $\pe^2 \geq 0$.
We denote the analytic continuation of $\cM(\pe^2)$ from real positive values $\pe^2$ to $z\in\mathbb{C}$ by $\cM(z)$.

Our task is to switch (\ref{ME}) to the Lorentzian signature and define the physical amplitude in Minkowski spacetime. This demands to extend the function (\ref{ME}) from real positive to real negative $z$, which can be achieved in two basic ways:\footnote{Turning on a mass, as we do in section~\ref{sec3}, simply shifts the branch point from $z=0$ to some other value ($z=-2$ there), but does not change the conclusions below.}
\begin{enumerate}
\item[(i)] By analytic continuation $p_D \rightarrow -\rmi p^0$ with $p^0\in \mathbb{R}^+_0$ of the function $\cM(\pe^2)$ from below the branch cut:
\be\label{bbmain}
\boxd{\cM_{\textrm{bb}}(\pl^2)\coloneqq \cM(\pl^2-\rmi\e)\,,}
\ee
where ``bb'' stands for ``by the book,'' since this is the standard textbook way to define the Lorentzian amplitude. As (\ref{ME}) has a branch point at $z=\pe^2=0$, its analytic continuation is encoded in that of the function $\ln(z)$ from real positive to real negative $z$ passing below the branch point at $z=0$. The result is
\be
\cM_{\textrm{bb}}(p^2)= \frac{\la^2}{(4 \pi)^2\ve}-\frac{\la^2}{32\pi^2}\left[{\rm Ln}\left(\frac{p^2-\rmi\e}{4\pi}\right)+\gamma_{\textsc{em}}-2\right]+O(\ve)\,,
\label{Mph}
\ee
where $\textrm{Ln}(z)$ is the principal branch of the complex logarithm, so that $\textrm{Ln}(z)\coloneqq \ln|z|+\rmi\,{\rm Arg}\,z$, with $-\pi<{\rm Arg}\,z\leq \pi$, and $\textrm{Ln}\,z^2= 2\,\textrm{Ln}\,z$ if $\Re\, z>0$, while $\textrm{Ln}(z^2)= 2\, \textrm{Ln}(-z)$ if $\Re\, z<0$.

This procedure defines a complex amplitude with a branch-cut at $p^2=0$, corresponding to the threshold of production of real intermediate massless particles, as we discuss below. Equivalently, one can analytically continue by passing above the branch point at $z=0$, obtaining the complex conjugate amplitude 
\be
\cM^*_{\textrm{bb}}(p^2)= \cM(\pl^2+\rmi\e)\,,
\label{Mph1}
\ee
which is useful for studying the optical theorem.
\item[(ii)] Alternatively, one can extend (\ref{ME}) to real negative $z$ by defining a complex, non-analytic function piecewise:
\be
\cM_{\textrm{AP}}(z)= 
\left\{ \begin{array}{ll} \cM(z) \qquad\quad\!\! \text{for } \quad \Re\, z> 0\\
	\cM(-z) \qquad \text{for } \quad \Re\, z<0 		\end{array}\right. \,. \label{MF0}
\ee
Explicitly, (\ref{MF0}) can be rewritten as 
\be
\cM_{\textrm{AP}}(p^2)= \frac{\la^2}{(4 \pi)^2\ve}-\frac{\la^2}{32\pi^2}\left\{\frac{1}{2}\,  \textrm{Ln}\left[\frac{(p^2)^2}{4\pi}\right]+\gamma_{\textsc{em}}-2\right\}+O(\ve)\,.
\label{Mf}
\ee
		
When evaluated for real $p^2$, (\ref{Mf}) is real. For negative $p^2$,  it  coincides with the average of the two analytic continuations (\ref{Mph})-(\ref{Mph1}) around the branch point at $p^2=0$. As a result, the complex amplitude (\ref{Mf})  does not have any threshold of production of real intermediate states. This fact is related to the existence of purely virtual particles (``fakeons''), which  cannot be produced in scattering processes as intermediate real states, as discussed below. 
\end{enumerate}
	
The two choices (i) and (ii) discussed above have different impacts on the optical theorem, which encodes the unitarity condition $SS^\dagger=1$ on the $S$ matrix  as a condition on the imaginary part of the complex amplitude. In the instances of interest to us, which is the bubble diagram, the optical theorem can be graphically written as
\be
2\,\text{Im}\left[ (-\rmi)\raisebox{-1mm}{\scalebox{2}{$-%
\hspace{-0.055in}\bigcirc\hspace{-0.055in}-$}}\right] =\int \mathrm{d}\Pi _{\rm f}\hspace{0.01in}%
\left\vert \raisebox{-1mm}{\scalebox{2}{$-\hspace{-0.035in}\langle$}}%
\right\vert ^{2},  \label{opt}
\ee
where the integral in the right-hand side is over the phase space of final states. A physical implication of this diagrammatic equation (which enforces unitarity at the one-loop level) is that, if the complex amplitude has a non-zero imaginary part, then the particles corresponding to the propagators of the internal loop momenta can be produced as real intermediate states. 

In the case (i), the amplitude (\ref{Mph}) has a branch point at $p^2=0$, and it has a non-zero imaginary part for real external momenta such that $p^2<0$. That means that particles corresponding to the internal lines in the bubble can be produced as intermediate states and must be in the spectrum of the theory. Therefore, if one aims at quantizing fake particles, e.g., modes which are excluded from the spectrum of the theory, one cannot define the complex amplitude by means of the procedure of analytic continuation of  (\ref{MF0}) defined in (i). However, the procedure described in (ii), corresponding to the ``average" between the two possible analytic continuations of (\ref{MF0}) around the branch point at $p^2=0$, gives the right answer for fake particles, as the amplitude (\ref{Mf}) is real for any real $p^2$, and the particles in the loop can never be produced as real intermediate states.

Switching to theories with complex poles, the conclusion is the same. In appendix~\ref{appB}, we show that, since the tree propagators 
\be
-\frac{\rmi M^2}{(p^2+m^2)^2+M^4}
\label{HDprop}
\ee
are purely imaginary, the contribution of the bubble diagram with circulating fake particles to the amplitude must necessarily be real, for consistency with the optical theorem (\ref{opt}): 
\be
2\,\textrm{Im}\,\cM_{\textrm{AP}}(p^2)= 0\,.
\ee

In conclusion, the simple example discussed in this section shows that consistency with the optical theorem demands that the complex poles must not be quantized via the standard analytic continuation (i) but as fake particles, e.g., by means of the averaging procedure (ii).


\section{Prescriptions preserving the optical theorem with complex masses}\label{sec3}

In this section, we compute the bubble diagram with the propagator (\ref{HDprop}). We show that if we apply the ``by the book'' procedure (i.e., calculate the diagram in the Euclidean framework and then define the Minkowskian amplitude by means of the analytic continuation),
the result is not real, in conflict with the optical theorem. If, instead, we use the AP recipe (average continuation), the result is real (hence agrees with the optical theorem), but not analytic. Moreover, we show that a new prescription emerges almost by accident through a subtle yet natural misstep in the calculation. 

It may help to view what follows as the generalization of the case discussed in the previous section to complex-conjugate masses. Consider the Lagrangian
\be\label{LWmodel}
\cL=-\frac12\,\phi\left[\left(\frac{\B-m^2}{M^2}\right)^2+1\right]M^2\phi-\frac{\la}{3!}\,\phi^3
\ee
in $D=4$ dimensions, where $m$ and $M$ are real masses. The tree-level propagator is
\ba
\hspace{-.5cm}\tilde \cG(-\kl^2)&=&-\frac{\rmi M^2}{(\kl^2+m^2)^2+M^4}
\ea 

The Euclidean bubble amplitude, evaluated in appendix \ref{appC1}, is 
\be\label{final}
\cM(\pe^2)=\frac{\la^2}{128\pi^2}\int_0^1\rmd x\,\ln\frac{[x(1-x)\pe^2+m^2]^2+(1-2x)^2M^4}{[x(1-x)\pe^2+m^2]^2+M^4}\,,
\ee
``ln'' being intended as the principal branch. Moreover, hereafter $\cM(z)$ denotes the analytic continuation of $\cM(\pe^2)$ from real positive values $\pe^2$ to $z\in\mathbb{C}$.

We readily note the following fun fact: the argument of the logarithm, which is nonnegative for all Euclidean momenta $\pe$, remains nonnegative when we replace $\pe$ with its Lorentzian version $\pl$, for every $\pl$. Thus, we may be tempted to believe that the analytic continuation of (\ref{final}) to Minkowski spacetime is
the one obtained by means of the replacement $\pe\rightarrow \pl$, which reads
\be\label{finalL}
\cM_{\textrm{fM}}(\pl^2)=\frac{\la^2}{128\pi^2}\int_0^1\rmd x\,{\rm Ln}\frac{[x(1-x)\pl^2+m^2]^2+(1-2x)^2M^4}{[x(1-x)\pl^2+m^2]^2+M^4}\,,
\ee
where ``fM'' stands for formal Minkowski. In appendix \ref{appC2}, we show that this expression is neither the analytic continuation of (\ref{final}), nor the average continuation of it. The simple but non-trivial replacement $\pe^2\to \pl^2$ in the integrand of \Eq{final}
is not the AP prescription studied in the literature, which is given by the average continuation
\be\label{APmain}
\boxd{\cM_{\textrm{AP}}(\pl^2)\coloneqq \frac12\left[\cM(\pl^2-\rmi\e)+\cM(\pl^2+\rmi\e)\right]\,.}
\ee

We have $\cM_{\textrm{AP}}(\pl^2)=\cM(\pl^2)$ in the zero-mass case \Eq{MF0} and \Eq{Mf} of the previous section. Indeed, a Euclidean amplitude is real for any positive $z=\pe^2$: $\cM(z>0)\in\mathbb{R}$. If we further assume the reflection symmetry $\cM(-z)=\cM(z)$, then there is no branch cut on the whole real line. Hence the $\pm i\epsilon$ displacements become irrelevant and
\be
\cM_{\rm AP}(\pl^2) \stackrel{\textrm{\tiny \Eq{APmain}}}=\cM(\pl^2)\,. 
\ee
If $\cM(z)\neq\cM(-z)$ for real $z$, then one cannot draw this conclusion. The specific model discussed in this section has $\cM_{\rm AP}(\pl^2) \neq \cM(\pl^2)$. 

In general, the analytic continuation may break the symmetry properties
of the integrand in the amplitude.
In appendix~\ref{appC2}, we show this by analytically continuing from the domain $0<z<2$, which is asymmetric with respect to the imaginary axis. At the time of writing, we do not have an argument showing that the condition $\cM(z)=\cM(-z)$ for $z\in\mathbb{R}$ is also necessary.


\section{Lorentz breaking and Lorentz restoring}\label{sec4}

In this section, we demonstrate that the LWN prescription violates Lorentz invariance and show that the symmetry can be restored by deforming the integration domain in a suitable way. 

According to the LWN prescription, the scattering amplitude should be evaluated by integrating the loop energy along the Lee-Wick complex path $\G_{\textrm{LW}}$ (see figure \ref{fig1}), while keeping the spatial momenta real. If the result were Lorentz invariant, we would have, comparing two reference frames:
\be
\int_{\G_{\textrm{LW}}\times\mathbb{R}^{D-1}}\rmd^Dk\,f(k,p) \stackrel{?}{=} \int_{\G_{\textrm{LW}}^\prime\times\mathbb{R}^{D-1}}\rmd^Dk'\,f(k',p')\,.
\ee
However, a finite Lorentz transformation in momentum space
\be\label{eq2}
k^{\mu\prime} = \Lambda_\nu^\mu k^\nu = \Lambda_0^\mu k^0+\Lambda_j^\mu k^j
\ee
does not turn the integration domain into an equivalent one. In fact, since $k^0$ is complex and both the $k^i$ and the Lorentz matrix elements are real, then 
\be\label{eq2b}
k^{i\prime} = \Lambda_0^i k^0+\Lambda_j^i k^j
\ee
must be complex. This violates the LWN prescription that would require $\bm{k}'\in\mathbb{R}^{D-1}$ independently of the frame choice.

To be specific, observe that integrating the loop energy along the LW path is equivalent to integrating it along the real line, which gives the ``direct-Minkowski'' amplitude (\ref{MtotappA}), and adding the contributions of certain residues (see figure \ref{fig1}). The direct-Minkowski amplitude is manifestly Lorentz invariant (appendix \ref{appA}). On the other hand, the residue associated with some pole $k^{0}=$ $\ \bar{k}^{0}$ gives,
in a specifice Lorentz frame, a contribution of the form%
\begin{equation}
\int_{\Gamma }dk^{0}\int_{\mathbb{R}^{3}}d^{3}\bm{k~}f(k^{0},\bm{k}%
,p)\bm{=}\int_{[0,1]\times \mathbb{R}^{3}}d\tau d^{3}\bm{k~}\dot{k}%
^{0}(\tau )f[k^{0}(\tau ),\bm{k},p]\,,  \label{l1}
\end{equation}%
where $\Gamma$ is a small circle around $\bar{k}^{0}$ parametrized by a real parameter $\tau$ and the dot denotes the derivative with respect to $\tau$. In another frame, we have the same expression with primes everywhere. We are assuming that the integrand is a
scalar, i.e., $f(k^{0\prime },\bm{k}^{\prime },p^{\prime
})=f(k^{0},\bm{k},p)$. If there existed a change of variables $%
\tau ,\bm{k}\rightarrow \tau ^{\prime },\bm{k}^{\prime }$ that
induces the Lorentz transformation, then we could write%
\begin{eqnarray}
\hspace{-.5cm}\int_{\Gamma ^{\prime }}dk^{0\prime }\int_{\mathbb{R}^{3}}d^{3}\bm{k}%
^{\prime }\bm{~}f(k^{0\prime },\bm{k}^{\prime },p^{\prime }) &%
\bm{=}&\int_{[0,1]\times \mathbb{R}^{3}}d\tau ^{\prime }d^{3}\bm{k}%
^{\prime }\bm{~}\dot{k}^{0\prime }(\tau ^{\prime })f(k^{0\prime }(\tau
^{\prime }),\bm{k}^{\prime },p^{\prime })  \nonumber \\
\hspace{-.5cm}&=&\int_{[0,1]\times \mathbb{R}^{3}}d\tau \,d^{3}\bm{k~}\dot{k}^{0\prime
}(\tau ^{\prime })\left\vert \frac{\partial (\tau ^{\prime },\bm{k}%
^{\prime })}{\partial (\tau ,\bm{k})}\right\vert f[k^{0}(\tau),\bm{k},p]\,.  \label{l2}
\end{eqnarray}%
Since the function $f$ is arbitrary, to have (\ref{l1}) equal to (\ref{l2})
we need
\begin{equation}
\left\vert \frac{\partial (\tau ^{\prime },\bm{k}^{\prime })}{\partial
(\tau ,\bm{k})}\right\vert =\frac{\dot{k}^{0}(\tau )}{\dot{k}^{0\prime
}(\tau ^{\prime })}.  \label{forma}
\end{equation}%
The left-hand side is real, while the right-hand side is complex, since the
curves $\Gamma$ and $\Gamma^{\prime }$ are located away from the real
axis. This shows that the LWN\ prescription is not manifestly Lorentz
invariant. The explicit calculation below shows that it does violate Lorentz
symmetry.

The argument does not apply to the case where the poles are located on the
real axis, since the right-hand side of (\ref%
{forma}) is real when the curves $\Gamma$ and $\Gamma^{\prime }$ are
arbitrarily close to the real axis.

Now we show that Lorentz invariance is broken by the LWN prescription but can be restored by deforming the domain of spatial momenta to complex values. We also show that the domain deformation is equivalent to the average continuation. Besides presenting the domain deformation in a simpler way, we give a new proof of this equivalence. 

The bubble diagram with propagators (\ref{HDprop}) contributes to the amplitude via the integral 
\begin{equation}
	{\cal M}(\pl)=-\frac{\rmi\la^2}{2}\int_{\G_\textrm{LW}} \frac{\rmd^{0}\kl}{2\pi}\int_{\mathbb{R}^{D-1}} \frac{\rmd^{D-1}\bm{\kl}}{(2\pi)^{D-1}}\underbrace{\frac{M^2}{(\kl^{2})^{2}+M^{4}}}_{\rm (I)}\underbrace{\frac{M^2}{[(\kl+\pl)^{2}]^{2}+M^{4}}}_{\rm (II)} 
    \,, \label{BAP}
\end{equation}%
where the momenta $k$ and $p$ are Lorentzian, with the caveat that the integral on the loop energy $k^0$ is performed along the Lee--Wick integration path depicted in figure \ref{fig1}.
\begin{figure}[t]
\begin{center}
\includegraphics[width=12truecm]{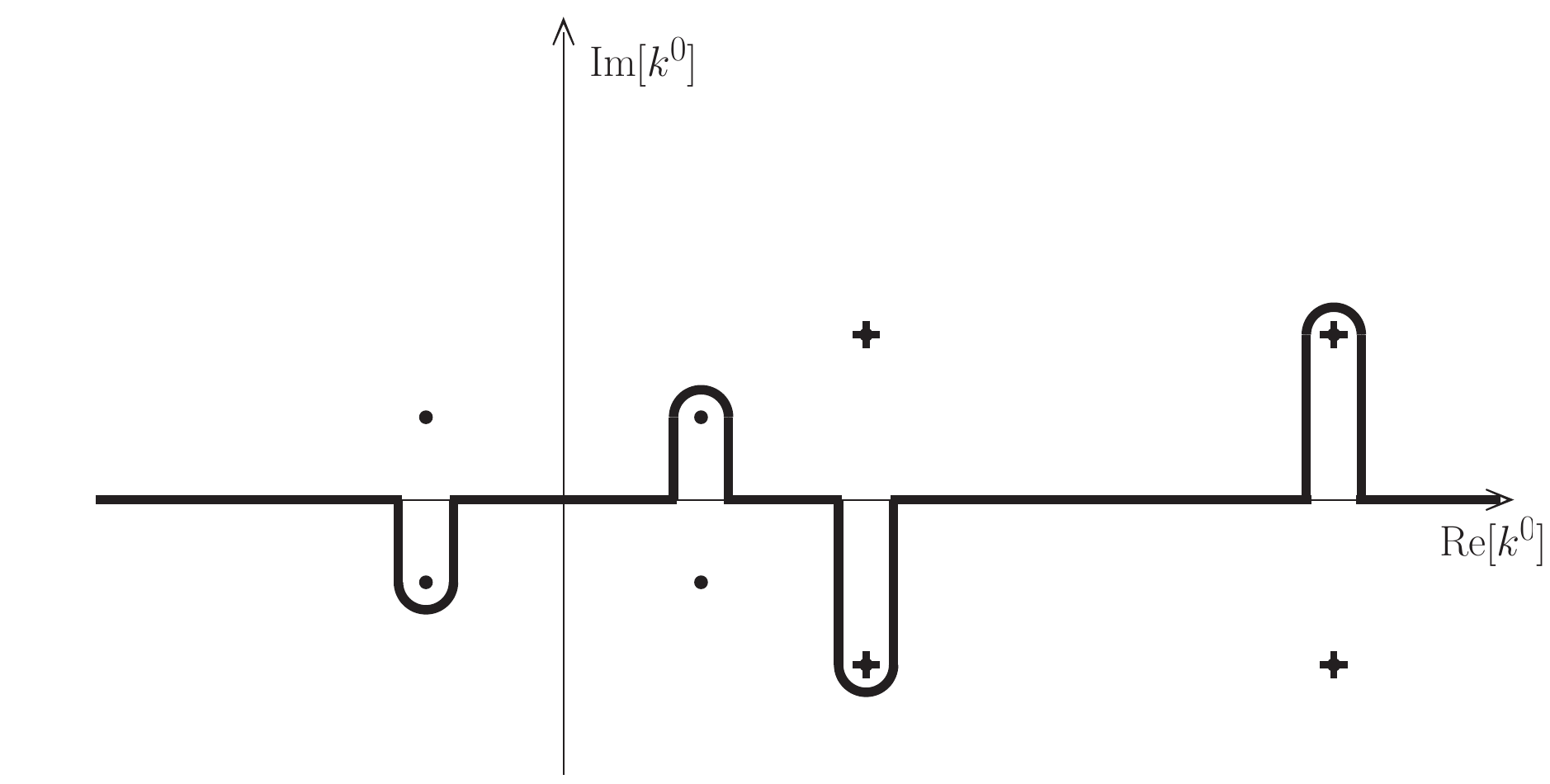}
\end{center}
\caption{The Lee--Wick integration path $\G_{\textrm{LW}}$. The dots are the $p^0$-independent poles, while the crosses denote the $p^0$-dependent ones.}
\label{fig1}
\end{figure}

The propagators have four poles each at $k^0=\bar k_{1}^{0\pm},\bar k_{1}^{0\pm*},\bar k_{2}^{0\pm},\bar k_{2}^{0\pm*}$:
\begin{equation*}
	\text{(I) }\left\{ 
	\begin{tabular}{l}
		$\bar k_{1}^{0\pm}=\pm \Omega _{\bm{k}},$ \\ 
		$\bar k_{1}^{0\pm*}=\pm \Omega _{\bm{k}}^{\ast },$%
	\end{tabular}%
	\right. \quad \quad \text{(II) }\left\{ 
	\begin{tabular}{l}
		$\bar k_{2}^{0\pm}=-p^{0}\pm\Omega _{\bm{k}+\bm{p}},$ \\ 
		$\bar k_{2}^{0\pm*}=-p^{0}\pm\Omega _{\bm{k}+\bm{p}}^{\ast },$%
	\end{tabular}%
	\right. ,
\end{equation*}%
respectively, where $\Omega _{\bm{q}}\coloneqq \sqrt{\bm{q}^{2}-\rmi M^{2}}$. 

We write each propagator as the sum of simple poles by partial fraction decomposition. Then we expand the integrand and integrate each term on $k^0$ with the Lee--Wick contour prescription, using the residue theorem. We do not need to calculate the $\bm{k}$ integral explicitly.

The result of the operations just mentioned is a linear combination of 
\begin{eqnarray*}
	&&\text{(A) }=\frac{1}{\Omega _{\bm{k}}+\Omega _{\bm{k}+\bm{p}%
		}^{\ast }\pm p^{0}},\quad \text{(B) }=\frac{1}{\Omega _{\bm{k}}^{\ast
		}+\Omega _{\bm{k}+\bm{p}}\pm p^{0}}, \\
	&&\text{(C) }=\frac{1}{\Omega_{\bm{k}}+\Omega _{\bm{k}+\bm{p}%
		}\pm p^{0}},\quad \text{(D) }=\frac{1}{\Omega _{\bm{k}}^{\ast }+\Omega _{%
			\bm{k}+\bm{p}}^{\ast }\pm p^{0}},
\end{eqnarray*}%
to be further integrated over ${\bm k}\in\mathbb{R}^{D-1}$. More precisely, we find
\begin{eqnarray}
-32\lambda^{-2}{\cal M}(p)&=&\int_{\cD^\prime_{\bm k}}\frac{1}{\Omega _{\bm{k}}\Omega _{\bm{k}+\bm{p}}(\Omega _{\bm{k}}+\Omega _{\bm{k}+\bm{p}}+p^{0})}-\int_{\cD_{\bm k}} \frac{1}{\Omega _{\bm{k}}\Omega _{\bm{k}+\bm{p}}^*(\Omega _{\bm{k}}+\Omega _{\bm{k}+\bm{p}}^*+p^{0})}\notag\\
&-&\int_{\cD_{\bm k}^*}\frac{1}{\Omega _{\bm{k}}^*\Omega _{\bm{k}+\bm{p}}(\Omega _{\bm{k}}^*+\Omega _{\bm{k}+\bm{p}}+p^{0})}+\int_{\cD^{\prime *}_{\bm k}}\frac{1}{\Omega _{\bm{k}}^*\Omega _{\bm{k}+\bm{p}}^*(\Omega _{\bm{k}}^*+\Omega _{\bm{k}+\bm{p}}^*+p^{0})}\notag\\
&+&(p^0\rightarrow -p^0),
\label{LWinte}
\end{eqnarray}
where the integration measure ${\rm d}^{D-1}{\bm k}/(2\pi)^{D-1}$ is understood. With $\cD_{\bm k}=\cD_{\bm k}^*=\cD_{\bm k}^\prime=\cD_{\bm k}^{\prime *}=\mathbb{R}^{D-1}$, this is the loop integral as Lee and Wick intended it. Later on, we will deform the integration domains to complex values. The notation used in (\ref{LWinte}) emphasizes that we will do it by keeping the right-hand side real for $p$ real.

We study $\cM(p)$ for real spatial momentum $\bm{p}$ and complex energy $p^0$. If $p^0$ is such that no $\bm{k}$ makes a denominator vanish, the result is a function $\cM(p)$ analytic in $p^0$. If $p^0$ is such that some $\bm{k}$ makes the integrand singular, the function $\cM(p)$ needs not be analytic in $p^0$.

Note that the singularities of the $\bm{k}$ integrand occur where both propagators of ${\cal M}$ are singular, i.e., $(\kl^{2})^{2}+M^{4}=0=[(\kl+\pl)^{2}]^{2}+M^{4}$. (A)\ and (B)\ are the same, upon translating $\bm{k}$ by $-\bm{p}$
and reflecting $\bm{k}$ to $-\bm{k}$. Thus, it is sufficient to study one. Moreover, the singularities of (C)\ and (D)\ have no intersection with the real axis and can be put aside. 

For clarity, it is convenient to study the problem in $D=2$ dimensions, where the computations can be done explicitly \cite{Anselmi:2018kgz}. There, $%
\bm{p}$ and $\bm{k}$ have just one component: $p_{x}$ and $k_{x}$,
respectively. Keeping $p_{x}\in $ $\mathbb{R}$ fixed, the curves drawn by the singularities in the complex $p^{0}$ plane by varying $k_{x}\in $ $\mathbb{R}$ are shown in
figure \ref{fig2}. 
\begin{figure}[t]
	\begin{center}
		\includegraphics[width=14cm]{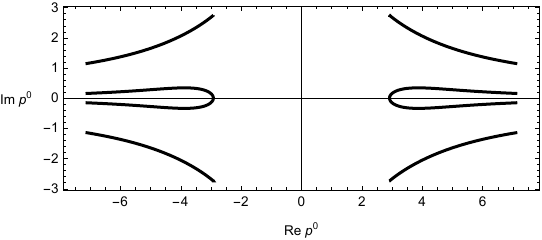}
	\end{center}
	\caption{Non-analyticities of the bubble amplitude ${\cal M}(p)$ in the complex $p^{0}$ plane in $D=2$ dimensions for $k_{x}\in \mathbb{R}$, with $M=2$ and $p_{x}=1$.}
	\label{fig2}
\end{figure}

Explicitly, (A) and (B)\ give the curves%
\begin{eqnarray}
	\gamma &\text{:}&\text{\quad }p^{0}=\sqrt{k_{x}^{2}-\rmi M^{2}}+\sqrt{%
		(k_{x}+p_{x})^{2}+\rmi M^{2}},  \notag \\
	\gamma ^{\prime } &\text{:}&\text{\quad }p^{0}=-\sqrt{k_{x}^{2}-\rmi M^{2}}-%
	\sqrt{(k_{x}+p_{x})^{2}+\rmi M^{2}}. \label{gaga}
\end{eqnarray}%
In figure \ref{fig2}, these are the ones that intersect the real $p^{0}$ axis. Such curves are reminiscent of the branch cuts of an ordinary bubble diagram. The correct cuts, which we identify below by means of a deformation of the integration domain $\cD_{\bm k}$ to complex values, are half lines on the real axis contained in the
regions $\mathcal{A}_{\gamma }$ and $\mathcal{A}_{\gamma ^{\prime }}$
bounded by $\gamma $ and $\gamma ^{\prime }$. The point is that $\mathcal{A}_{\gamma }$ and $\mathcal{A}_{\gamma ^{\prime}}$ are not cuts, but extended regions. Moreover, they are not Lorentz invariant. 

The singularities of (C) and (D)\ give the curves of figure \ref{fig2}
that do not intersect the real axis. Because of this, we do not need to deform the integration domain $\cD_{\bm k}^\prime$ to complex values. 

In $D>2$ dimensions we reach similar conclusions, with the difference that the integrands are singular not just on the curves $\gamma $ and $\gamma ^{\prime }$, but everywhere inside $\mathcal{A}_{\gamma }$ and $\mathcal{A}%
_{\gamma^{\prime}}$. The curves of (C) and (D) also turn into extended regions.

When the external energy $p^0$ is located outside the regions $\mathcal{A}_{\gamma }$ and $\mathcal{A}_{\gamma^{\prime}}$, the function $\cM(p)$ coincides with the one given by the by-the-book prescription. Indeed, when $p^0$ is continued to Euclidean values the LW integration path on $k^0$ is equivalent to the Euclidean loop-energy integration. 

We want to clarify the properties of the function $\cM(p)$ when $p^{0}$ is located inside $\mathcal{A}_{\gamma }$ and $\mathcal{A}%
_{\gamma^{\prime}}$, or at their borders.
In $D>2$ the $\bm{k}$ integral on $\mathbb{R}^{D-1}$ is well defined, but the result
$\cM(p)$ is neither analytic nor Lorentz invariant.
In $D=2$, $\cM(p)$ is both analytic and Lorentz invariant, but  physically correct only for $-\pl^{2}>$ $2M^{2}$. The reason is  that the curve $\gamma$ intersects the real axis in a point that does not have a Lorentz invariant meaning. 

The troubles in both $D>2$ and $D=2$ dimensions are solved by the so-called  \emph{domain deformation}. Instead of integrating $\bm{k}$ on $\mathbb{R}^{D-1}$, we integrate it on a deformed, complex domain $\mathcal{D}_{\bm{k}}$ that shrinks the regions $\mathcal{A}_{\gamma }$ and $\mathcal{A}_{\gamma ^{\prime }}$ into the expected branch cuts $-\pl^2 \geq 2M^2$ on the real axis. Then we find that the result obtained by evaluating ${\cal M}(p)$ with $-\pl^2 \geq 2M^2$ from inside $\mathcal{A}_{\gamma }$ or $\mathcal{A}_{\gamma^{\prime }}$ coincides with the average of the two analytic continuations of the Euclidean $\cM(\pe)$ around the branch point $-\pl^2 = 2M^2$. This operation is what we call \emph{average continuation} of the Euclidean result \cite{Anselmi:2017yux}.

In $D=2$ dimensions (for details in $D>2$, see \cite{Anselmi:2018kgz}), the domain deformation is worked out by inverting eqs.\ (\ref{gaga}) and expressing $k_{x}$ as a function of the external momentum $\pl$. Squaring (to treat both $\gamma $ and $\gamma^{\prime }$ at once) and solving, the solutions for a Lorentzian $\pl^2=-(p^0)^2+p_x^2$ are
\begin{equation}
	k_{x}=-\frac{p_{x}}{2}-\rmi\frac{p_{x}}{\pl^{2}}M^{2}\pm \frac{p^{0}}{2\pl^{2}}\sqrt{%
		(\pl^{2})^{2}-4M^{4}}\,.  \label{defo}
\end{equation}%

When we keep both $p^{0}$ and $p_{x}$ real (because we want to
shrink the whole regions onto the real axis of the complex $p^{0}$ plane), formula
(\ref{defo}) tells us that $k_{x}$ is complex. The two solutions with $\pl$ real
and $-\pl^{2}>2M^{2}$ define the deformed domain $\mathcal{D}_{\bm{k}}$ in
the complex $k_{x}$ plane. It is plotted in figure \ref{fig3}.

The height of the deformation is proportional to $p_{x}$. If we choose a
Lorentz frame with $p_{x}=0$ (which we can do, since $-\pl^{2}>0$ here), the
right-hand side of (\ref{defo}) is real and the regions $\mathcal{A}_{\gamma}$ and $\mathcal{A}_{\gamma ^{\prime }}$ are already shrunk onto branch cuts: no domain deformation is needed. The drawback is that we cannot compute the integral ``from the inside.'' 

The different properties of the computation in different Lorentz frames ($%
p_{x}=0 $, $p_{x}\neq 0$) is a manifestation that Lorentz invariance does
not hold before the deformation. Yet, by construction the AP procedure guarantees that Lorentz invariance is ultimately respected. This means that if our purpose is a mere calculation, rather than studying Lorentz invariance and other general properties, we are allowed to choose a special frame at some intermediate step. In particular, with due caution for integrable singularities, one can fix $p_x=0$ before integrating on the internal spatial momenta. Replacing $p_0^2\to-\pl^2$ in the final expression leads to the same result as in the full procedure \cite{Anselmi:2017yux,Liu:2022gun}. This shortcut to the AP prescription avoids an explicit integration on a complex $\bm{k}$ path.
\begin{figure}[t]
	\begin{center}
	\includegraphics[width=14cm]{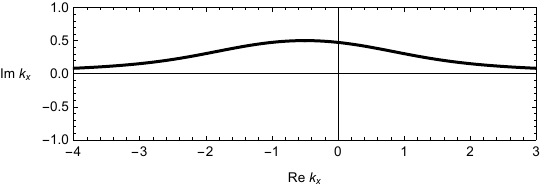}
	\end{center}
	\caption{Deformed integration domain for $k_{x}$ in the $(\Re\,k_x,\Im\, k_x)$ plane, with $M=2$ and $p_{x}=1$. This is the locus of singularities of the integrand of the bubble diagram in the complex $k_x$ plane in $D=2$ dimensions for $p^0\in \mathbb{R}$ and $(p^0)^2>p_x^2+2M^2$, with $M=2$ and $p_{x}=1$.}
	\label{fig3}
\end{figure}

Going back to (\ref{LWinte}) and collecting the pieces of information uncovered
so far, we can interpret the integral $\cM(p)$ in two equivalent
ways, one from ``outside'' and one from
``inside'' the regions $\mathcal{A}_{\gamma
}$ and $\mathcal{A}_{\gamma ^{\prime }}$.
\begin{enumerate}
\item[1.] \textit{Average continuation from the outside}. As we did in the previous section, we first compute (\ref{Mtotapp}), which is the Euclidean version of (\ref{BAP}),
for a Euclidean external momentum $\pe$. Then we
analytically continue the result to the Minkowski domain, with the caveat
that, when we meet the branch cut on the real axis (which occurs for $\pl$ such
that $-\pl^{2}>2M^{2}$), we average the two analytic continuations around the
cut. This way, the regions $\mathcal{A}_{\gamma }$ and $\mathcal{A}_{\gamma ^{\prime }}$
are never met.
\item[2.] \textit{Domain deformation from the inside}. We compute (\ref{LWinte}) by deforming the integration domain in $\bm{k}$, so as to shrink the regions $\mathcal{A}_{\gamma }$ and $\mathcal{A}%
_{\gamma ^{\prime }}$ onto the real axis. We obtain an integral that evaluates precisely to the average continuation from the outside. The proof of this statement is given below. 
\end{enumerate}
The fakeon formulation of purely virtual particles is defined by either procedure. The choice of which one to use practically may depend on the situation at hand.

\begin{theo}
The calculations of $\cM(p)$ by means of 1) the average continuation from the outside and 2) the domain deformation from the inside give the same results.
\end{theo}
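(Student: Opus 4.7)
The plan is to reduce both procedures to explicit expressions on the real $p^0$ axis and show that they coincide term by term in the decomposition (\ref{LWinte}), via a Sokhotski--Plemelj identification: the deformed domain $\mathcal{D}_{\bm{k}}$ is by construction the locus where the denominators of the (A) and (B) terms vanish, so integration along it yields the principal value, which coincides with the half-sum of the two $\pm\rmi\e$ boundary values that define the average continuation. As a first step, I would split (\ref{LWinte}) according to the analytic structure of the integrand in $\bm{k}$: the (C) and (D) pieces, whose singularity curves in the complex $p^0$-plane do not touch the real axis, yield an amplitude that is analytic across $-\pl^2=2M^2$, so their contributions from the two prescriptions coincide trivially and the domains $\mathcal{D}^\prime_{\bm{k}},\mathcal{D}^{\prime *}_{\bm{k}}$ can be kept equal to $\mathbb{R}^{D-1}$. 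The entire content of the theorem thus lives in the (A) and (B) terms.

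Next, for external $p^0$ outside the regions $\mathcal{A}_\gamma$ and $\mathcal{A}_{\gamma'}$, the LWN integral over $\bm{k}\in\mathbb{R}^{D-1}$ equals the Euclidean amplitude analytically continued from one chosen side of the branch cut, a fact already exploited earlier in the paper to match LWN with the by-the-book prescription in the analytic regime. Letting $p^0$ approach the cut from above or from below then identifies the two analytic continuations $\cM(\pl^2\mp\rmi\e)$ entering (\ref{APmain}) with the corresponding real-domain LWN integrals evaluated with the singular locus held on either side of the $\bm{k}$ contour.

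For $p^0$ on the real axis inside the regions, I would track the motion of the (A) and (B) singularities in the complex $\bm{k}$-plane as $p^0$ is displaced by $\pm\rmi\e$. These two displacements correspond, after homotopic contour deformation, to paths $\mathcal{D}_{\bm{k}}^{\pm}$ that pinch $\mathcal{D}_{\bm{k}}$ from opposite sides. Applying Sokhotski--Plemelj in the coordinate normal to $\mathcal{D}_{\bm{k}}$ to each simple pole in (A) and (B), the $\rmi\pi\,\delta$ contributions cancel in the half-sum, leaving only the principal value, which is precisely what integration along $\mathcal{D}_{\bm{k}}$ computes. This yields the desired equality termwise in (\ref{LWinte}) and hence for $\cM(p)$.

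The main obstacle is the rigorous implementation in $D>2$, where the singular set in $\bm{k}$-space is an extended real codimension-two submanifold rather than the isolated curves $\gamma,\gamma'$ of the two-dimensional case, so Sokhotski--Plemelj must be applied fibrewise after introducing local transverse coordinates, with care that the resulting principal-value distribution extends smoothly over the submanifold. A useful independent check is provided by the $p_x=0$ frame discussed after (\ref{defo}): there $\mathcal{D}_{\bm{k}}$ collapses onto the real axis, the direct real-$\bm{k}$ integration gives a manifestly real function of $\pl^2$, and matching this result to the average continuation computed in a generic frame through Lorentz covariance closes the argument.
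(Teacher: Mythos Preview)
Your route via Sokhotski--Plemelj is genuinely different from the paper's, which never introduces a principal value at all. The paper's proof rests on two global properties of the full amplitude (\ref{LWinte}): the reality relation $\cM(p)=\cM^*(p)$ for real $p$, which holds throughout the deformation because the (A)/(B) terms and their conjugate domains $\cD_{\bm k},\cD_{\bm k}^*$ are built as a complex-conjugate pair, and continuity of $\cM(p)$ in $p^0$ (integrable, one-dimensional singularity). From reality, the inside limit is purely real and the two outside boundary values are complex conjugates; from continuity, the real part is single-valued across the shrunk cut. These two facts alone force inside $=\tfrac12(\text{above}+\text{below})$ without ever analysing individual poles.

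The step in your plan that does not go through is ``integration along $\mathcal{D}_{\bm k}$ yields the principal value.'' The domain deformation from the inside is \emph{one-sided}: for fixed real $p^0$ on the cut, the contour $\mathcal{D}_{\bm k}(t)$ starts at $\mathbb{R}^{D-1}$ and is dragged toward the final $\mathcal{D}_{\bm k}$ without crossing the (A) pole, so in the limit it touches that pole from the side where the real domain lay. Hence for the (A) term alone the inside limit reproduces one of the two boundary values, not their half-sum. Your pinching argument correctly shows that the \emph{average continuation} equals the PV along $\mathcal{D}_{\bm k}$, but it does not show that the inside limit is that PV. The mismatch for (A) is repaired only after adding (B), integrated over $\cD_{\bm k}^*$, which approaches its conjugate pole from the opposite side; summing then reproduces precisely the reality argument the paper uses. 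In short, Sokhotski--Plemelj by itself does not close the loop: you still need the conjugate pairing (\ref{symmetry}) to reconcile the inside limit with the average, which is the paper's mechanism rather than an independent one.
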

\begin{proof}
The statement follows from a symmetry argument, combined with continuity. The first thing to note is that, as long as the regions $\mathcal{A}_{\gamma }$ and $\mathcal{A}%
_{\gamma ^{\prime }}$ have finite extensions, the loop integral evaluates to a continuous function $\cM(p)$,
because, although the integrand is divergent in those regions, the singularity is integrable and one-dimensional.

The second observation is that 
$\cM(p)$
is real for $p$ real, as is clear from
(\ref{LWinte}). Extending $p^0$ to complex values, we have 
\begin{equation}
\cM(p)=\cM^*(p)\qquad \textrm{for every }p.
\label{symmetry}
\end{equation}

Now, take $p^0$ real inside one of the regions $\mathcal{A}_{\gamma}$ and $\mathcal{A}_{\gamma ^{\prime }}$, and calculate the integral $\cM(p)$ there, with $\cD_{\bm{k}}=\cD_{\bm{k}}^\prime=\mathbb{R}^{D-1}$. We know that the result is real.

Shrink each region onto the real axis by deforming the domain $\cD_{\bm{k}}$. Formula (\ref{LWinte}) ensures that (\ref{symmetry}) holds throughout the deformation. Thus, $\cM(p)$ is still real for real $p$ located inside the shrunk regions $\mathcal{A}_{\gamma}$ or $\mathcal{A}_{\gamma ^{\prime }}$. 

Consider the limit of $\cM(p)$ with respect to the shrinking operation, which we denote by ${\rm lim}_{\rm shr}$. What we have just said implies that the imaginary part of the limit ``from the inside'' is zero:
\be
\Im\,({\rm lim}_{\rm shr} \cM)=0\,.
\ee
The real part is discussed below.

We compare this result with what we obtain by  letting $p^0$ tend to the real axis while staying outside the shrinking regions. In this case the result is not affected by the domain deformation, because, by construction, the deformation does not cross any singularity of the integrand. We have two options: reaching the real axis from above and from below. Equation (\ref{symmetry}) ensures that the imaginary part of the limit from above the real axis is equal to minus the imaginary part of the limit from below:
\be
\Im\,({\rm lim}_{\rm shr \downarrow} \cM)=-\Im\,({\rm lim}_{\rm shr \uparrow} \cM)\,.
\ee
Generically speaking, the two limits can be non-zero. Yet, we do not need their values here. We just need to know that they are opposite to each other. Since taking the imaginary part commutes with the limit, we infer that both the imaginary part of the limit of $\cM(p)$ from the inside and the sum of the limits of the imaginary parts from above and from below are zero:
\ben
{\rm lim}_{\rm shr}\,\Im\,\cM=0= {\rm lim}_{\rm shr \downarrow}\Im\,\cM + {\rm lim}_{\rm shr \uparrow} \Im\,\cM\,.
\een
The relationship between the left-most and right-most sides of this expression can only be linear, so we equate them up to an arbitrary coefficient $\a$:
\be\label{cor1}
{\rm lim}_{\rm shr}\,\Im\,\cM=\a\left({\rm lim}_{\rm shr \downarrow}\Im\,\cM + {\rm lim}_{\rm shr \uparrow} \Im\,\cM\right).
\ee

As far as the real parts are concerned, continuity and the symmetry (\ref{symmetry}) ensure that the limit from above coincides with the limit from below, as well as the limit from inside:
\be
{\rm lim}_{\rm shr}\,\Re\,\cM= {\rm lim}_{\rm shr \downarrow}\Re\,\cM={\rm lim}_{\rm shr \uparrow} \Re\,\cM\,.
\ee
Thus, the limit of the real part from the inside coincides with the average of the limits of the real parts from above and from below:
\be
{\rm lim}_{\rm shr}\,\Re\,\cM= \frac12\left({\rm lim}_{\rm shr \downarrow}\Re\,\cM + {\rm lim}_{\rm shr \uparrow} \Re\,\cM\right).
\ee
Comparing this expression with \Eq{cor1}, we can fix $\a=1/2$ and conclude the proof that the limit of $\cM(p)$ from inside one of the regions $\mathcal{A}_{\gamma}$ and $\mathcal{A}_{\gamma ^{\prime }}$ is equal to the average of the two analytic continuations of $\cM(p)$ from above the real axis and from below:
\be
{\rm lim}_{\rm shr}\,\cM= \frac12\left({\rm lim}_{\rm shr \downarrow}\,\cM + {\rm lim}_{\rm shr \uparrow}\,\cM\right),
\label{finall}
\ee
as announced.
\end{proof}

At this point, we know that the result of the domain deformation from the inside, which is the left-hand side of (\ref{finall}), is Lorentz invariant. The line of reasoning is as follows: ($i$) the Euclidean result is Lorentz invariant and coincides with the result of the calculation from outside; ($ii$) the analytic continuations from above and from below the real axis are Lorentz invariant; ($iii$) so is the right-hand side of (\ref{finall}); ($iv$) so is the left-hand side. We stress again that the right-hand side, which is calculated from the outside, is not affected by the domain deformation, which does not cross singularities of the integrand.


\section{Conclusions}\label{sec5}

In this work, we have studied and clarified four different ways to obtain Lorentzian amplitudes in quantum field theories with complex poles: a textbook analytic continuation, the LWN prescription \cite{Lee:1969fy,Lee:1970iw,Nakanishi:1971jj}, the fakeon or AP prescription \cite{Anselmi:2017yux,Anselmi:2017lia,Anselmi:2018bra} and a direct calculation in Lorentzian signature, which fails to yield a viable quantum theory \cite{Aglietti:2016pwz}. Of the above four prescriptions, only AP is physically viable because it respects both the optical theorem and Lorentz invariance.

These four ways need not be the only ones that make sense of field theories with complex poles. We already know one such example, the CLOP procedure \cite{Cutkosky:1969fq}. Both CLOP and the domain deformation advanced here and in \cite{Anselmi:2017yux} deal with the same problem from two different viewpoints. While the former is manifestly Lorentz invariant, the latter recovers such symmetry at a second stage.  Moreover, the former has ambiguities at high orders.
It would be interesting to better clarify the relationship between the two techniques.

The results presented in this paper are not limited to Lee--Wick models and generalize to nonlocal theories which may also have a finite number of poles, such as nonlocal quantum gravity \cite{Modesto:2017sdr,Buoninfante:2022ild,BasiBeneito:2022wux,Koshelev:2023elc} and fractional quantum gravity \cite{BrCa,Calcagni:2021aap,Calcagni:2022shb}. The only difference to take into account is that these theories are typically defined in Euclidean momentum space and then analytically continued \emph{\`a la} Efimov to Lorentzian signature \cite{Efimov:1967dpd,Pius:2016jsl,Briscese:2018oyx,Chin:2018puw,Koshelev:2021orf,Buoninfante:2022krn} (Euclidean external momenta are continued to Lorentzian ones after performing loop integrals). All the above applies with only one modification, namely, that the integration path in the internal energies can now be open (no arc at infinity). The classification of prescriptions for scattering amplitudes presented here holds in these perturbative approaches to quantum gravity and, in particular, puts unitarity results in fractional gravity on a more solid ground. 


\section*{Acknowledgments}

F.B., G.C.\ and L.M.\ are supported by grant PID2023-149018NB-C41 (G.C.\ as PI) funded by the Spanish Ministry of Science, Innovation and Universities MCIN/AEI/10.13039/ 501100011033. F.B. is grateful to P.M. Santini for useful discussions on Appendix C.


\appendix


\section{Troubles with the direct Minkowski prescription}\label{appA}

In this appendix, we recall a result of \cite{Aglietti:2016pwz}, stating that loop integrals of higher-derivative theories with complex poles have nonlocal divergent parts if they are defined directly in Minkowski spacetime. We study once again the bubble diagram of the model (\ref{LWmodel}) but take both internal and external momenta $\kl$ and $\pl$ to be Lorentzian from the start. 

The amplitude reads
\be
\cM_{\rm dM}(p^2) = \frac{\rmi \la^2}{2}\int_{\mathbb{R}^D} \frac{\rmd^D k}{(2\pi)^D}\,\tilde \cG(-k^2)\,\tilde \cG[-(k+p)^2]\,,\label{MtotappA}
\ee
where ``dM'' stands for direct Minkowski. Clearly, $\cM_{\rm dM}(p^2)$ is Lorentz invariant. Moreover, it is real for real $p$, which is good for the optical theorem. The problem is that it does not satisfy the locality of counterterms.

The common rules of power counting do not hold in the direct Minkowski approach because the propagator does not fall off rapidly enough along light cones.  Moreover, since the propagators have poles located in the first and third quadrants of the complex loop energy plane, the result is not straightforwardly related to the Euclidean one. Hence, it does not inherit the usual power-counting behaviour of the Euclidean integral. 

The consequence is that the direct Minkowski approach violates the locality of counterterms and the theory is not BPHZ renormalizable. For example, the integral (\ref{MtotappA}) has a nonlocal divergent part in $D=6$ dimensions. Nonlocal divergent parts also appear in $D=4$ if the integrand is multiplied by polynomials of $k^\mu$, brought in by non-trivial vertices, as in gravity theories.

We briefly describe what happens in $D=6$ dimensions, directing the reader to \cite{Aglietti:2016pwz} for the case $D=4$. We first integrate on the loop energy $k^0$ by means of the residue theorem. Then we expand the integrand for large $|{\bm k}|$. Finally, we integrate term by term in ${\bm k}$, using dimensional regularization. At the end, we obtain the nonlocal divergent part 
\be
\cM_{\rm dM}^\textrm{div}(p^2)=\frac{2\rmi M^2\lambda^2}{3\,\ve (4\pi)^3}\left[\frac{M^4}{(p^2)^2}-\frac{3}{4}\right],
\ee
where $D=6-\ve$.


\section{Optical theorem with complex poles}\label{appB}

In this appendix, we elaborate on the implications of the optical theorem for theories with complex poles. Decomposing the $S$ matrix as usual, $S=1+\rmi T$, the optical theorem $-\rmi T+\rmi T^\dagger=TT^\dagger$ is another way to write the unitarity equation $SS^\dagger=1$. 

The optical theorem can be expressed diagrammatically by means Cutkosky-Veltman diagrams, also knows as ``cut diagrams'', which are made of two parts, separated by a cut. One part corresponds to $iT$ and the other one (here denoted with an asterisk) corresponds to $-iT^\dagger$.

In the instances of interest to us, which are the propagator and the bubble diagram, the diagrammatic identities read
\ba
2\hspace{0.01in}\text{Im}\left[ (-\rmi)%
\raisebox{-1mm}{\scalebox{2}{$\rangle
\hspace{-0.07in}-\hspace{-0.07in}\langle$}}\,\right] =%
\raisebox{-1mm}{\scalebox{2}{$\rangle
\hspace{-0.07in}-\hspace{-0.13in}\slash\hspace{-0.033in}_{\raisebox{-2pt}{\scalebox{0.5}*}}\hspace{-0.025in}\langle$}}+\raisebox{-1mm}{\scalebox{2}{$\rangle
\hspace{-0.07in}-\hspace{-0.13in}^{\scalebox{0.5}*}\hspace{-0.043in}\slash\hspace{-0.015in}\langle$}} &=&\int 
\mathrm{d}\Pi _{\rm f}\hspace{0.01in}\left\vert \raisebox{-1mm}{\scalebox{2}{$%
\rangle\hspace{-0.035in}-$}}\right\vert ^{2},  \label{cutd} \\
2\hspace{0.01in}\text{Im}\left[ (-\rmi)\raisebox{-1mm}{\scalebox{2}{$-%
\hspace{-0.055in}\bigcirc\hspace{-0.055in}-$}}\right] =\hspace{0.01in}%
\raisebox{-1mm}{\scalebox{2}{$-\hspace{-0.055in}\bigcirc\hspace{-0.15in}%
\slash\hspace{0.015in}-$}}\hspace{-0.2in}\raisebox{-11.6pt}{*}\hspace{0.11in}+\hspace{0.01in}%
\hspace{0.11in}\raisebox{6pt}{*}\hspace{-0.2in}\raisebox{-1mm}{\scalebox{2}{$-\hspace{-0.055in}\bigcirc\hspace{-0.15in}%
\slash\hspace{0.015in}-$}} &=&\int \mathrm{d}\Pi _{\rm f}\hspace{0.01in}%
\left\vert \raisebox{-1mm}{\scalebox{2}{$-\hspace{-0.035in}\langle$}}%
\right\vert ^{2}.  \label{optapp}
\ea
The vertical slash crossing propagators is the cut that divides the diagram in the two portions just mentioned. The portion with an asterisk is built with complex conjugate vertices and propagators. It is understood that positive energies flow from the portion without asterisks into the portion with the asterisk. 

The right-hand sides of the two equations represent the phase-space integrals involved in the cross sections for the production of the particles associated with the cut propagators. 
Such particles, and their cut propagators, can be derived from (\ref{cutd}). Inserting the cut propagators into (\ref{optapp}), it is then possible to check the optical theorem.

For example, in the case of the Feynman prescription, we have
\be
2\,\hspace{0.01in}\text{Im}\left[(-\rmi)(-1)\frac{-\rmi}{k^2+m^2-\rmi\e}\right]=\theta(k^0)(2\pi)\delta(k^2+m^2)+\theta(-k^0)(2\pi)\delta(k^2+m^2)\,,
\ee
where the factor $(-1)$ is due to the vertices (setting the coupling to one). On the right-hand side, the product between a vertex and its complex conjugate gives $+1$. 

We infer that the cut propagator of a physical particle is
\be
\raisebox{-1mm}{\scalebox{2}{$
-\hspace{-0.06in}-\hspace{-0.18in}\slash\hspace{-0.033in}_{\raisebox{-2pt}{\scalebox{0.5}*}}$}}\hspace{.12in}=\theta(k^0)(2\pi)\delta(k^2+m^2)\,.
\label{cutproph}
\ee

Now we check (\ref{optapp}) in the case of the bubble diagram with circulating physical particles in the massless limit. Formula (\ref{Mph}) gives
\be
2\,\textrm{Im}\cM_{\textrm{bb}}(p^2)= \frac{\theta(-p^2)}{16\pi}\,.
\ee
The cut diagrams of \Eq{optapp} are bubbles calculated with the cut propagator (\ref{cutproph}). For example, the first one gives
\ba
\raisebox{-1mm}{\scalebox{2}{$-\hspace{-0.055in}\bigcirc\hspace{-0.15in}%
\slash\hspace{0.015in}-$}}\hspace{-0.2in}\raisebox{-11.6pt}{*}\hspace{0.11in}&=&\frac{1}{2}\int \frac{\rmd^Dk}{(2\pi)^D}\theta(k^0)(2\pi)\delta(k^2)\theta(p^0-k^0)(2\pi)\delta[(p-k)^2]\nn
&=&\frac{\theta(p^0)\,\theta(-p^2)}{16\pi}+O(\ve)\,.
\ea
The second one gives the same with $\theta(p^0)\rightarrow\theta(-p^0)$.
We see that, in total, (\ref{optapp}) holds in $D=4$ dimensions for standard particles.

In the case of a propagator like (\ref{HDprop}), the left-hand side of equation (\ref{cutd}) vanishes. This implies that the cut propagators vanish. Hence, the left-hand side of (\ref{optapp}) also vanishes and (\ref{HDprop}) must be treated as fake particles. To make the right-hand sides of (\ref{cutd}) and (\ref{optapp}) disappear as well, the modes associated with the complex conjugate poles must be removed from the physical spectrum of asymptotic states. This ``projection," combined with the average continuation (or the LW integration path for loop energies plus the domain deformation for loop spatial momenta), completes the AP prescription for fake particles and ensures consistency with the optical theorem.


\section{Massive \texorpdfstring{$\phi^3$}{phi3} bubble amplitude}\label{appC}

In this appendix, we calculate the Euclidean bubble amplitude for an interacting cubic scalar field theory and study ways to extend the result to Lorentzian signature.


\subsection{Calculation of \Eq{final}}\label{appC1}

The one-loop  bubble amplitude for the model \Eq{LWmodel} is
\ba
\cM(\pe^2) &=& \frac{\rmi \la^2}{2}\int_{\mathbb{R}^D} \frac{\rmi\,\rmd^D \ke}{(2\pi)^D}\,\tilde \cG(-\ke^2)\,\tilde \cG[-(\ke+\pe)^2]\,,\label{Mtotapp}
\ea
where both $\ke$ and $\pe$ are Euclidean and the second $\rmi$ is due to the analytic continuation from Lorentzian to Euclidean internal momentum. Using Feynman parameters (see, e.g., \cite{PeSc}), we find
\be
\cM(\pe^2)\eqqcolon \cM_{++}(\pe^2) + \cM_{+-}(\pe^2) + \cM_{+-}^*(\pe^2)+ \cM_{++}^*(\pe^2)\,,
\ee
where
\ba
\cM_{++}(\pe^2)&=&\frac{\la^2}{8}\int_{\mathbb{R}^D} \frac{\rmd^D \ke}{(2\pi)^D}\,\frac{1}{\ke^2+m^2+\rmi M^2}\frac{1}{(\ke+\pe)^2+m^2+\rmi M^2}\nn
&=&\frac{\la^2}{8}\frac{\Gamma\left(2-\frac{D}{2}\right)}{(4\pi)^{D/2}}\int_0^1 \rmd x\,\left[\pe^2 x(1-x)+m^2+iM^2\right]^{\frac{D}{2}-2}\nn
&=&\frac{\la^2}{128\pi^2}\left\{
\frac{2}{\varepsilon}+\ln(4\pi)-\g_{\textsc{em}}-\int_0^1 \rmd x\,\ln\left[\pe^2 x(1-x)+m^2+\rmi M^2\right]+\cO (\varepsilon)
\right\},\nn\\
\cM_{+-}(\pe^2)&=&-\frac{\la^2}{8}\int_{\mathbb{R}^D} \frac{\rmd^D \ke}{(2\pi)^D}\,\frac{1}{\ke^2+m^2+\rmi M^2}\frac{1}{(\ke+\pe)^2+m^2-\rmi M^2}\nn
&=&-\frac{\la^2}{8}\frac{\Gamma\left(2-\frac{D}{2}\right)}{(4\pi)^{D/2}}\int_0^1 \rmd x\,\left[\pe^2 x(1-x)+m^2-\rmi M^2(1-2x)\right]^{\frac{D}{2}-2}\nn
&=&-\frac{\la^2}{128\pi^2}\left\{
\frac{2}{\varepsilon}+\ln(4\pi)-\g_{\textsc{em}}\right.\nn
&&\left.\qquad\qquad
-\int_0^1 \rmd x\,\ln\left[\pe^2 x(1-x)+m^2-\rmi M^2(1-2x)\right]+\cO (\varepsilon)
\right\}.
\ea
Here we have expanded $D=4-\ve$ for small $\ve$.

Summing the various contributions and sending $\ve\to 0$, we find
\be\label{finale}
\cM(\pe^2)=\frac{\la^2}{128\pi^2}\int_0^1\rmd x\,\ln\frac{[\pe^2x(1-x)+m^2]^2+(1-2x)^2M^4}{[\pe^2 x(1-x)+m^2]^2+M^4}\,,
\ee
in $D=4$ dimensions. We emphasize that the complex logarithm in the integral is meant as its principal branch, since (\ref{Mtotapp}) is real.

\subsection{Analytic and average continuations of \Eq{final}}\label{appC2}

In this section, we study the analytic continuation of (\ref{final}), then perform the average continuation and finally show the differences with respect to (\ref{finalL}).
	First, we define the following complex function
\be\label{final complex}
\cM(z)=\frac{\la^2}{128\pi^2}\int_0^1\rmd x\,\ln\frac{[x(1-x) z+\left(m/M\right)^2]^2+(1-2x)^2}{[x(1-x) z +\left(m/M\right)^2]^2+1}\, , \qquad \Re \,z > 0\, ,
\ee
where $z$ is a complex variable, so that (\ref{final}) corresponds to a real positive $z= \pe^2/M^2$.	Note the different mass scaling of $z$ with respect to the convention used in the main body of the paper.

To simplify the task without losing key properties, we choose $m=0$. The first contribution to (\ref{final complex}) is given by the function
	\ba
	f_+(z)&=&\int_{0}^{1}\rmd x\,\ln \left[x^{2}(1-x)^{2}z^{2}+(1-2x)^{2}\right]\label{inte}\\
	&=&  \int_{0}^{1}\rmd x\left\{\ln \left[x(1-x)z+\rmi(1-2x)\right]+\ln\left[x(1-x)z-\rmi(1-2x)\right]\right\}\label{inte0},  \qquad  \Re \,z > 0\,. \nonumber 
	\ea
An explicit evaluation  gives 
	\be
	f_{+}(z)= \frac{2\pi }{z}-4+\rmi\frac{\sqrt{4-z^{2}}}{z}\, \textrm{Ln} \left[\frac{z+\rmi\sqrt{			4-z^{2}}}{z-\rmi\sqrt{4-z^{2}}}\right]\, ,  \qquad  \Re \,z > 0\,,   \label{fz}
	\ee
where we have used the notation $\textrm{Ln}(z)$ to emphasize that the logarithm is intended in its principal branch. We also define the square roots as their principal branches, even if choosing the other branches would not change (\ref{fz}).
	
Let us study the analytic continuation of (\ref{fz}) to the whole complex $z$ plane. We first show that (\ref{fz}) has a simple zero (not a branch point) at $z=2$. 	We use the following expansion
	\begin{equation}\label{c9}
\textrm{Ln}(z+\mu) = \textrm{Ln}(z) + \sum_{j=1}^{\infty}\frac{\left(-1\right)^{j-1}}{j}\left(\frac{\mu}{z}\right)^j \, \quad \text{for} \quad \left\vert\frac{\mu}{z} \right\vert < 1, 
	\end{equation}
where $\mu = \sqrt{4-z^{2}}$.	The condition $\left\vert\mu^2/z^2 \right\vert=\left\vert\left( 4-z^{2}\right)/z^2 \right\vert < 1$ is verified in two disjoint regions of the complex $z$ plane, the	region $\Re\, z > \sqrt{2+ (\Im\, z)^2 }$ depicted in blue in figure~\ref{fig4} and the region $\Re\, z <- \sqrt{2+ (\Im\, z)^2 }$ depicted in red. 

Indeed, in the neighborhood of $z=2$ one finds the expansion
\be\label{series1}
i\,\sqrt{4-z^{2}}\, \textrm{Ln} \left[\frac{z+\rmi\sqrt{	4-z^{2}}}{z-\rmi\sqrt{4-z^{2}}}\right] = \sum_{j=0}^{\infty}\frac{2 \, z}{2j+1}\left(\frac{z^2-4}{z^2}\right)^{j+1}  ,\quad \Re\, z > \sqrt{2+ (\Im\, z)^2 }\,,
\ee
which shows that $z=2$ is regular and it is a simple zero of the function $f_+(z)$ defined in (\ref{fz}).

Now, we want to study the analytic continuation of $f_+(z)$ to the region $\Re \, z <0$ of the complex $z$ plane. It is obtained by the analytic continuation of the $\textrm{Ln}$ in \Eqq{fz}, that is,
\be\label{Ln 1}
\ln \left[\frac{z+\rmi\sqrt{	4-z^{2}}}{z-\rmi\sqrt{4-z^{2}}}\right] = \left\{\begin{array}{ll}
	\textrm{Ln} \left[\dfrac{z+\rmi\sqrt{	4-z^{2}}}{z-\rmi\sqrt{4-z^{2}}}\right], \qquad\qquad \Re\, z >0\\
	\\
	\textrm{Ln} \left[\dfrac{z+\rmi\sqrt{	4-z^{2}}}{z-\rmi\sqrt{4-z^{2}}}\right]+ 2\pi\rmi\,, \qquad \Re\, z \leq 0
\end{array}\right.\,.
\ee
In fact, the argument of the logarithm is real negative if, and only if, $z$ is purely imaginary and the two expressions on the right-hand side of (\ref{Ln 1}) are the analytic continuation of each other, as they coincide on the line $\Re \, z=0$.

The analytic continuation $f_\textrm{an}(z)$ of $f_{+}(z)$ to the region $\Re\,z<0$ is
\ba
\hspace{-1cm}f_{\textrm{an}}(z)&\coloneqq& \frac{2\pi }{z}-4+\rmi\frac{\sqrt{4-z^{2}}}{z}\, \ln \left[\frac{z+\rmi\sqrt{			4-z^{2}}}{z-\rmi\sqrt{4-z^{2}}}\right]\nn
&&\vphantom{1}\nn
&=&\left\{ 
\begin{array}{ll}
\dfrac{2\pi }{z}-4+\rmi\dfrac{\sqrt{4-z^{2}}}{z}\, \textrm{Ln}	  \left[\dfrac{z+\rmi\sqrt{			4-z^{2}}}{z-\rmi\sqrt{4-z^{2}}}\right] , \qquad\qquad\qquad\,\,\,\, \Re\, z > 0\\ 
	\\
	\dfrac{2\pi }{z}-4+\rmi\dfrac{\sqrt{4-z^{2}}}{z}\, \left\{\textrm{Ln}	  \left[\dfrac{z+\rmi\sqrt{			4-z^{2}}}{z-\rmi\sqrt{4-z^{2}}}\right]+2\pi\rmi \right\}, \qquad \Re\, z \leq 0
\end{array}
\right. .
\label{fan}
\ea
Expanding (\ref{Ln 1}) in $z=0$, one has
\be\label{series2}
\ln \left[\frac{z+\rmi\sqrt{	4-z^{2}}}{z-\rmi\sqrt{4-z^{2}}}\right] =\ln \left[-1+\frac{z^2}{2}+\frac{\rmi z \sqrt{4-z^{2}}}{2}\right] = \pi\rmi + O(z^2) \, ,\qquad \vert z \vert \ll 1 ,
\ee
which implies that $z=0$ is a regular point of $f_ {\textrm{an}}(z)$. Moreover, from (\ref{fan}) it is also evident that $z=-2$ is a square-root branch point of $f_ {\textrm{an}}(z)$. Indeed, $f_ {\textrm{an}}(z)$ is analytic on the complex $z$ plane cut along the semi-line $(-\infty,-2]$.
	\begin{figure}[t]
		\begin{center}
		\includegraphics[width=10truecm]{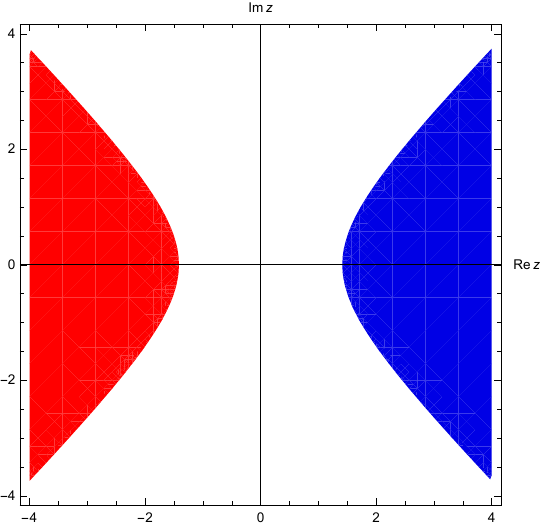}
		\end{center}
		\caption{The regions $\Re\, z > \sqrt{2+ (\Im\, z)^2 }$ (blue) and $\Re\, z <- \sqrt{2+ (\Im\, z)^2 }$ (red) of the complex $z$ plane which solve the condition $\left\vert\left( 4-z^{2}\right)/z^2 \right\vert < 1$.}
		\label{fig4}
	\end{figure}
	
For completeness, we give the explicit expression of $f_{\textrm{an}}(z)$ for real values of its argument, i.e., $z=x \in \mathbb{R}$:
\be\begin{array}{ll}
	f_{\textrm{an}}(x)
	=\left\{ 
	\begin{array}{ll}
		\dfrac{2\pi}{x}-4+ \dfrac{\sqrt{x^{2}-4}}{x}\, \textrm{Ln}	  \left[\dfrac{x+\sqrt{			x^{2}-4}}{x-\sqrt{x^{2}-4}}\right] , \qquad\qquad\qquad\qquad x\geq  2\\ 
		\\
		\dfrac{2\pi }{x}-4-2\,\dfrac{\sqrt{4-x^{2}}}{x}\, \arctan\left[\dfrac{\sqrt{4-x^{2}}}{x}\right] , \qquad\qquad\qquad 0 < x \leq 2
\\
\\
		\dfrac{2\pi }{x}-4-2\,\dfrac{\sqrt{4-x^{2}}}{x}\left\{\pi + \arctan\left[\dfrac{\sqrt{4-x^{2}}}{x}\right] \right\}, \qquad  -2 < x <0
	\end{array}
	\right. .
\end{array}
\label{fan 2}
\ee
Moreover, 
\be
	f_{\textrm{an}}(x \pm i \epsilon)
	=		\frac{2\pi }{x}-4+ \frac{\sqrt{x^{2}-4}}{x}\, 	 \left\{ \textrm{Ln}\left[\frac{x+\sqrt{			x^{2}-4}}{x-\sqrt{x^{2}-4}}\right] \mp 2 \pi i \right\} , \qquad x < - 2 .
\ee

The second contribution to (\ref{final complex}) is encoded in the function
\ba
g(z)&=&\int_{0}^{1}\rmd x\,\ln \left[x^{2}(1-x)^{2}z^{2}+1\right]\label{fan 3}\\
&=&  \int_{0}^{1}\rmd x\left\{\ln \left[x(1-x)z+\rmi \right]+\ln\left[x(1-x)z-\rmi \right]\right\}\label{inte00},  \qquad  \Re \,z > 0 \nonumber 	
\ea
which can be evaluated to 
\be
\begin{array}{ll}
	g(z)= -4 + g_+(z)+ g_-(z),\quad \text{with}\\
	\\
	g_\pm(z) = \sqrt{\dfrac{z\pm 4 \rmi}{z}} \,\,\textrm{Ln} \left[\dfrac{1+\sqrt{z/\left(z\pm 4 i\right)}}{1-\sqrt{z/\left(z\pm 4 i\right)}}\right] \, , \qquad  \Re \,z > 0.
\end{array}
\ee
Using the same expansion as in \Eq{c9},
 one has
\be
	g_\pm(z)= 2 \sum_{j=0}^{\infty} \frac{1}{1+2j} \left(\frac{z}{z\pm 4 i}\right)^{ j}, \qquad \vert \Im\, z \vert <2 \,.
\ee
Therefore, the analytic continuation $g_{\textrm{an}}(z)$ of (\ref{fan 3}) is regular at $z=0$, and has two branch points at $z= \pm 4 i$. Moreover, the representation 
\be
g_{\textrm{an}}(z)= -4+ 2 \sum_{j=0}^{\infty} \frac{1}{1+2j} \left[\left(\frac{z}{z+ 4 i}\right)^{ j}+\left(\frac{z}{z- 4 i}\right)^{ j}\right]\,, \qquad \vert \Im\, z \vert <2 \, ,
\ee 
is valid on a strip of the complex $z$ plane that includes the real axis. Furthermore, one has $g_{\textrm{an}}(z)= g_{\textrm{an}}(-z)$, and $g^*_{\textrm{an}}(z)=g_{\textrm{an}}(z^*)$, so $g_{\textrm{an}}(z)$ is symmetric under reflection with respect the imaginary axis of the complex $z$ plane, and it is real for real $z$.

In conclusion, the analytic continuation of (\ref{final complex}), that is to say,
\be
\cM_{\textrm{an}}(z)=\frac{\la^2}{128\pi^2} \left[f_{\textrm{an}}(z)-g_{\textrm{an}}(z)\right],
\ee
has two branch points at $z=\pm 4 i$ and a branch point at $z=-2$. Moreover, it is real for real $z>-2$, while it has an imaginary part for real $z<-2$, given by
\be
\Im \,\cM_{\textrm{an}}(x \pm \rmi \epsilon)
=\mp   \frac{\la^2}{64\pi} \frac{\sqrt{x^{2}-4}}{x}\,, \qquad x < - 2\, .
\ee
The branch point at $z=-2$ corresponds to a threshold of production of intermediate real states.

As an alternative to the analytic continuation of (\ref{final complex}), one can define the average continuation of (\ref{final complex}) for real $z<-2$ as in \Eqq{APmain}, that is,
\be
\cM_{\textrm{AP}}(x)=\frac{\la^2}{256\pi^2}\left[f_{\textrm{an}}(x-\rmi\e)+f_{\textrm{an}}(x+\rmi\e)-2g_{\textrm{an}}(x)\right], \qquad x\in\mathbb{R}\,,
\label{MAP}
\ee
which cancels the imaginary part of $f_{\textrm{an}}(x)$ for $x<-2$. This gives a complex scattering amplitude that is real for any real $x$,  
which agrees with the optical theorem for purely virtual particles. 

Expression \Eq{finalL}, instead, consists of the formal replacement $\pe^2 \rightarrow p^2$ inside the integrand of (\ref{final}).
In the case at hand ($m=0$) we obtain
\be\label{amplitude fM}
\cM_{\textrm{fM}}(z)= \left\{ \begin{array}{ll}\cM_{\textrm{an}}(z)\,, \,\,\,\,\qquad \Re\, z >0\\
	\\
\cM_{\textrm{an}}(-z)\,, \qquad \Re\, z < 0
\end{array}\right..
\ee
This function satisfies $\cM_{\textrm{fM}}(z)=\cM_{\textrm{fM}}(-z)$ but is not analytic across the imaginary axis of the complex $z$ plane. Restricting to real $z=x \in \mathbb{R}$, (\ref{amplitude fM}) is real and has a corner in $x=0$, i.e., it is continuous but not $C^1$ (discontinuous derivatives) therein. The physical meaning of this behaviour is obscure at the moment.


\section{Lorentz violation with real masses: mixed momenta prescription}\label{appD}

In this appendix, we present a simple example that illustrates how easy it is to violate Lorentz symmetry. We consider the ordinary bubble diagram with propagator
\be\label{propc2}
\tilde G(-\kl^2,m^2) = -\frac{\rmi}{\kl^2+m^2}\,,\qquad m\in\mathbb{R}\,.
\ee
However, instead of evaluating the integral in the Euclidean domain and then analytically continuing the result to Minkowski spacetime, we choose hybrid momenta: a Euclidean internal momentum in combination with a Lorentzian external momentum.

Specifically, we study 
\begin{equation}\label{i}
\mathcal{M}(\pl)=\la^2\int \frac{\rmd^{D}\ke}{(2\pi)^{D}}\frac{1}{\ke^2+m^2}\frac{1}{%
(\ke+\pL)^2+m^2}\,,  
\end{equation}%
where the integrated momentum $\ke=(k_{D},\bm{k})$ is Euclidean, the external momentum $\pl=(p^{0},\bm{p})$ is Lorentzian and $\pL\coloneqq(-\rmi p^{0},\bm{p})$ denotes the Wick rotation of the Euclidean external momentum $\pe=(p_D,\bm{p})$. We want to show that $\mathcal{M}$ is not Lorentz invariant, which is expected because \Eq{i} has the same structure as \Eq{eq12}, where the imaginary contribution comes from the $2k_D\pL^0=-2\rmi k_D p^{0}$ term  in $2\,\ke\cdot\pL$. For definiteness, we assume $p^{0}>0$, $\pl^2<0$.

Let us write $\pL=Wp$, where $W=\textrm{diag}(-\rmi,1,1,1)$. We consider a Lorentz transformation $p\rightarrow\Lambda p$ and combine it with a change of variables given by a rotation $R$. The only ingredient of (\ref{i}) that is not manifestly invariant is the square $(\ke+\pL)^2$. It is easy to check that $(\ke+\pL)^2$ is invariant for arbitrary $\ke$ and $\pL$ if, and only if,
\be
R=W\Lambda W^*.
\ee
In particular, the rotation $R$ must be complex, which is not a legitimate change of variables in the integral. This signals that the result of (\ref{i}) likely violates Lorentz invariance.

To show that indeed it does, we compute it explicitly and compare the result with the one of the usual bubble diagram, which is obviously Lorentz invariant. In Euclidean spacetime, the bubble integral is
\begin{equation}
\mathcal{M}(\pe)=\la^2\int \frac{\rmd^{D}\ke}{(2\pi)^{D}}\frac{1}{\ke^2+m^2}\frac{1}{%
(\ke+\pe)^2+m^2},  \label{ip}
\end{equation}%
where $\pe=(p_{D},\bm{p})$ is also Euclidean. Then we have to perform the Wick
rotation. We denote the result of this operation by $\mathcal{M}_{\rm an}(\pl)$.

While $\mathcal{M}(\pl)$
does not need any prescription because its integrand is never singular, $\mathcal{M}_{\rm an}(\pl)$ needs one for the Wick rotation. We take
this into account by assuming that $p^{0}$ has a small positive imaginary part $\epsilon$. 

The functions $\mathcal{M}(\pl)$ and $\mathcal{M}_{
\rm an}(\pl)$ do not coincide. To see this, we integrate on the loop energies $k_{D}$ by means of the residue theorem. 

In both cases, we have poles at $k_D=\bar k_{D,1}^{\pm}\equiv\pm \rmi\omega_{\bm{k}}$, where $\omega_{\bm{q}}=\sqrt{{\bm q}^2+m^2}$. Then we
have poles at $k_D=\bar k_{D,2}^{\pm}\equiv\rmi p^{0}\pm \rmi\omega_{\bm{k+p}}$ for $\mathcal{M}(\pl)$ and poles at $k_D=\tilde k_{D,2}^{\pm}\equiv -p_{D}\pm \rmi\omega_{\bm{k+p}}$ for $%
\mathcal{M}(\pe)$. Closing the $k_{D}$ integral by means of an arc that crosses the positive real $k^{0}$ axis, where $k^{0}=\rmi k_{D}$, both $%
\mathcal{M}(\pl)$ and $\mathcal{M}(\pe)$ pick the residue at the pole 
$k_D=\bar k_{D,1}^-=-\rmi\omega_{\bm{k}}$. In addition, $\mathcal{M}(\pe)$ picks the residue at the pole $k_{D}=\tilde k_{D,2}^-=-p_{D}-\rmi\omega_{\bm{k+p}}$. Instead, $\mathcal{M}(\pl)$ picks the one at $k_{D}=\bar k_{D,2}^-=\rmi p^{0}-\rmi\omega_{\bm{k+p}}$, but only if $%
\omega_{\bm{k+p}}>p^{0}$. After the calculation of ${\cal M}(\pe)$, the analytic continuation to ${\cal M}_{\rm an}(\pl)$ is done by replacing $p_D$ with $-\rmi(p^0+\rmi\epsilon)$.

We focus on the combination%
\begin{equation}
\mathcal{U}(\pl)\coloneqq \mathcal{M}(\pl)-\frac{1}{2}%
\mathcal{M}_{\rm an}(\pl)+\frac{1}{2}\overline{\mathcal{M}_{\rm an}(\pl)}.
\end{equation}%
Since $\mathcal{M}_{\rm an}(\pl)$ is Lorentz invariant, it is sufficient to prove that $\mathcal{U}(\pl)$ is not.

After some work, we obtain
\begin{equation*}
\mathcal{U}(\pl)=\la^2\,\mathrm{PV}\int \frac{\rmd^{D-1}\bm{k}}{%
(2\pi)^{D-1}}\frac{\theta (p^{0}-\omega _{\bm{k+p}})}{4\omega _{\bm{k}}\omega _{\bm{k+p}}}\left(\frac{1}{p^{0}-\omega _{\bm{k+p}%
}-\omega _{\bm{k}}}-\frac{1}{p^{0}-\omega_{\bm{k+p}}+\omega _{\bm{k}}}\right),
\end{equation*}%
where $\mathrm{PV}$ denotes the Cauchy principal value. 

The integral is convergent in $D=4$. After a translation $\bm{k}%
\rightarrow \bm{k-p}$, we can evaluate it straightforwardly at $m=0$.
The result is
\begin{equation}
\mathcal{U}(\pl)=\frac{\la^2}{8\pi ^{2}}\mathrm{PV}\int_{-1}^{1}\rmd u\int_{0}^{p^{0}}\frac{\rmd k\,k}{(p^{0}-k)^{2}-k^{2}-\bm{p}^{2}+2k|\bm{p}|u}=-\frac{\la^2}{8\pi^2}\frac{p^{0}}{|\bm{p}|}\,\tanh^{-1}\frac{|\bm{p}|}{p^{0}}\,,
\end{equation}%
where $k=|\bm{k}|$. Writing $p^{0}=\sqrt{s+\bm{p}^{2}}$, we find%
\begin{equation}
\mathcal{U}(\pl)=-\frac{\la^2}{24\pi^2}\left[3+\frac{\bm{p}^{2}}{s}+%
\mathcal{O}(|\bm{p}|^{4})\right] .
\end{equation}%
We see that $\mathcal{U}(\pl)$ is not Lorentz invariant, because it does not depend on the invariant $s$ only, but also on the square of the spatial external momentum ${\bm p}$.


\end{document}